\documentclass[conference]{IEEEtran}

\usepackage[letterpaper, left=0.680in, right=0.680in, bottom=1.049in, top=0.71in]{geometry}

\usepackage{ifpdf}
\ifCLASSINFOpdf
\usepackage[pdftex]{graphicx}
\graphicspath{{./Figures/}}
\DeclareGraphicsExtensions{.pdf,.jpeg,.png}
\else
\usepackage[dvips]{graphicx}
\DeclareGraphicsExtensions{.pdf}
\usepackage[caption=false, font=footnotesize]{subfig}
\fi

\usepackage{amsthm}
\usepackage{balance}
\usepackage{epstopdf}
\usepackage{amsmath}
\usepackage{amssymb}
\usepackage{color}
\usepackage{array}
\usepackage{mathtools}
\usepackage[ruled, vlined, linesnumbered]{algorithm2e}
\usepackage{enumitem}
\usepackage[mathcal]{euscript}
\usepackage{tabularx}
\usepackage{multirow}
\usepackage{booktabs}
\usepackage{makecell}
\usepackage{aligned-overset}
\usepackage{bbm}
\usepackage{subfig}
\usepackage{url}
\usepackage{tikz}
\usetikzlibrary{shapes.geometric, arrows.meta, positioning, calc}

\newtheorem{lemma}{Lemma}
\newtheorem{theorem}{Theorem}

\newtheorem{corollary}{Corollary}

\newtheorem{remark}{Remark}

\usepackage{xcolor}

\begin{document}

	\title{Communication-Centric RIS-Assisted ISAC: \\ Signal Modeling and BER Analysis}

	\author{
		\IEEEauthorblockA{Yosefine Triwidyastuti and Tri~Nhu~Do}
		\IEEEauthorblockA{Telecom Neural Detection Lab, Department of Electrical Engineering, Polytechnique Montr\'{e}al, Montreal, QC, Canada}
		\IEEEauthorblockA{Emails: yosefine-2.triwidyastuti@polymtl.ca, tri-nhu.do@polymtl.ca}
	}

	\maketitle

\begin{abstract}
We propose and analyze a communication-centric reconfigurable intelligent surface (RIS)-assisted integrated sensing and communication (ISAC) system, in which a monostatic radar simultaneously senses a moving target and serves a user equipment (UE) over Nakagami-$m$ fading. We design a dual-function phase-modulated continuous-wave (PMCW) waveform that embeds the data stream directly into the radar pulse train: each pulse carries one full maximum-length sequence whose polarity is flipped by a binary phase-shift keying data symbol, so that the same emission preserves the sharp range autocorrelation required for sensing while conveying one bit per pulse to the UE. We further propose a communication-centric RIS phase configuration that co-phases each element onto the direct radar-to-UE path, yielding a coherent superposition at the UE and a received-power gain that scales with the square of the number of elements. We show that from the radar's perspective, however, the same surface behaves as an uncontrolled scatterer, since the resulting reflection paths are mis-phased and do not benefit from array combining. We derive a closed-form approximation for the average UE bit error rate based on a moment-matched Gamma approximation, and we show that the same waveform still forms a usable range--Doppler map for sensing. Monte-Carlo simulations corroborate the analytical results.
\end{abstract}

\begin{IEEEkeywords}
ISAC, RIS, PMCW, range--Doppler processing, Nakagami-$m$ fading, BER, moment matching.
\end{IEEEkeywords}

\section{Introduction}

Integrated sensing and communication (ISAC) jointly performs radar sensing and data communication over shared wireless resources, improving spectral and energy efficiency while reducing hardware complexity and cost \cite{BouzabiaTVT2025}.
In parallel, reconfigurable intelligent surfaces (RISs) have emerged as a promising means of controlling wireless propagation \cite{DoTCOM2021}. By adjusting the phase shifts of passive reflecting elements, an RIS can shape the channel to support both communication and sensing. This motivates the RIS-ISAC paradigm, where waveform design, channel-phase optimization, and RIS configuration are jointly considered \cite{JoyMCOMSTD2025}. However, because a configuration optimized for one function may degrade the other, the resulting sensing--communication tradeoff must be carefully characterized.

In this work, we consider a monostatic RIS-assisted ISAC system employing a phase-modulated continuous-wave (PMCW) waveform \cite{BanerjeeOJCOMS2026}. This setting couples two design questions: $(i)$ dual-function waveform design, where a maximum-length-sequence (MLS)-based PMCW waveform must preserve the range-domain matched-filter gain required for sensing while carrying communication data; and $(ii)$ RIS phase configuration under a multiple-reflection channel model, where the same RIS generates the coherent communication cascade together with the singly RIS-assisted target return, the static RIS self-return, and the doubly RIS-assisted target return, each with distinct delays, Doppler shifts, and residual phases. Aligned with the concept of dual-function radar-communication (DFRC) approaches \cite{VaeziCOMST2026}, the proposed MLS-based PMCW waveform can directly embed binary data through BPSK modulation of each MLS sequence, preserving sharp range autocorrelation while avoiding the overhead of additional modulation layers. While dual-function waveforms and RIS phase optimization have each been studied separately \cite{JoyMCOMSTD2025,BanerjeeOJCOMS2026}, the effect of a purely communication-centric RIS configuration on the coexisting sensing operation has not been fully characterized.
Motivated by this gap, we propose and analyze a communication-centric (CC) RIS-ISAC system. The main contributions are summarized as follows.

\begin{itemize}
    \item We develop a monostatic PMCW RIS-ISAC framework in which a single MLS-based dual-function waveform conveys one binary phase-shift keying (BPSK) bit per pulse to the UE while remaining usable for range--Doppler sensing at the radar, and we derive the corresponding sensing and communication received-signal models.

    \item We propose a communication-centric RIS phase configuration that co-phases the RIS-assisted cascade onto the direct radar--UE path. We show that this provides the UE with an asymptotic received-power gain proportional to $L^2$ (i.e., $20\log_{10}L$ dB relative to a single-element reference), where $L$ is the number of RIS elements.

    \item We derive a closed-form approximation to the average communication BER over Nakagami-$m$ fading using a moment-matched Gamma model and the moment-generating function (MGF), together with an exact no-RIS Gamma baseline.

    \item In addition, we show that, under the communication-centric configuration, the direct target echo does not interact with the RIS; consequently, the RIS-assisted returns are mis-phased at the radar and do not enhance the sensing SNR, while the self-return adds zero-Doppler clutter. The sensing branch therefore performs essentially as in the no-RIS case. 
\end{itemize}

\section{CC-RIS-ISAC System and Signal Model}

We consider a communication-centric RIS-assisted ISAC system in which a single monostatic radar node reuses one PMCW waveform to simultaneously $(i)$ sense a moving point target and $(ii)$ convey data to a UE. The link is assisted by an RIS whose phase shifts are configured solely to enhance the communication path.

\subsection{Network Topology and RIS Model}

\begin{figure}[t]
    \centering
    \begin{tikzpicture}[
        >={Latex[length=1.8mm]},
        font=\footnotesize,
        comm/.style={dashed,semithick,blue!55!black},
        sens/.style={dotted,thick,purple!55!black},
        radarN/.style={regular polygon,regular polygon sides=3,draw=black,
                       fill=blue!60!white,minimum size=8pt,inner sep=0pt},
        risN/.style={rectangle,draw=black,fill=green!60!black,
                     minimum size=8pt,inner sep=0pt},
        ueN/.style={circle,draw=black,fill=orange!85!black,
                    minimum size=8pt,inner sep=0pt},
        tgtN/.style={star,star points=5,draw=black,fill=purple!65,
                     minimum size=12pt,inner sep=0pt},
        lbl/.style={font=\scriptsize,inner sep=1.5pt}
    ]
        \coordinate (R) at (0,0);
        \coordinate (I) at (2.10,0.42);
        \coordinate (U) at (4.20,0);
        \coordinate (T) at (7.00,2.80);

        \draw[comm] (R) -- node[lbl,below]                 {$h_{\mathsf{RU}},\,d_{\mathsf{RU}}$} (U);
        \draw[comm] (R) -- node[lbl,above left,pos=.55]    {$h_{\mathsf{RI}},\,d_{\mathsf{RI}}$} (I);
        \draw[comm] (I) -- node[lbl,above right,pos=.45]   {$h_{\mathsf{IU}},\,d_{\mathsf{IU}}$} (U);
        \draw[sens] (R) -- node[lbl,sloped,above,pos=.45]  {$h_{\mathsf{RT}},\,d_{\mathsf{RT}}$} (T);
        \draw[sens] (I) -- node[lbl,sloped,below,pos=.58]  {$h_{\mathsf{IT}},\,d_{\mathsf{IT}}$} (T);

        \node[radarN] (Rn) at (R){};
        \node[risN]   (In) at (I){};
        \node[ueN]    (Un) at (U){};
        \node[tgtN]   (Tn) at (T){};

        \node[below=2pt of Rn]{$\mathsf{R}$ (radar)};
        \node[below=2pt of Un]{$\mathsf{U}$ (UE)};
        \node[below right=-1pt and 1pt of In]{$\mathsf{I}$};
        \node[above left=0pt and 1pt of Tn]{$\mathsf{T}$ (target, $\sigma$)};

        \draw[->,very thick] (Tn) -- ++(0.85,-0.85)
              node[below,font=\scriptsize]{$\mathbf v_{\mathsf T}$};

        \node[anchor=west,font=\scriptsize] (cfg) at (-0.15,2.55)
              {$\boldsymbol{\Phi}^{\star}=\mathrm{diag}\!\big(\kappa_l e^{j\vartheta_l^{\star}}\big),\ l=1,\dots,L$};
        \draw[->,gray,shorten >=2pt] (cfg.south) to[bend right=12] (In.north);

        \begin{scope}[shift={(-0.15,-0.95)}]
            \draw[comm] (0,0) -- (0.55,0);
            \node[right=1pt,font=\scriptsize] at (0.55,0){communication link};
            \draw[sens] (3.55,0) -- (4.10,0);
            \node[right=1pt,font=\scriptsize] at (4.10,0){sensing link};
        \end{scope}
    \end{tikzpicture}
    \caption{Network topology of the CC-RIS-ISAC system.}
    \label{fig_topology}
\end{figure}

Four nodes are deployed in a two-dimensional plane, as illustrated in Fig.~\ref{fig_topology}: the radar $\mathsf{R}$, the UE $\mathsf{U}$, the moving target $\mathsf{T}$, and the RIS $\mathsf{I}$. The location of node $\mathsf{A}\in\{\mathsf{R},\mathsf{T},\mathsf{U},\mathsf{I}\}$ is $(x_{\mathsf A},y_{\mathsf A})$, and $d_{\mathsf{AB}}$ denotes the distance between nodes $\mathsf A$ and $\mathsf B$. The radar is a co-located monostatic transceiver; the target is a point scatterer with velocity $\mathbf v_{\mathsf T}$; and the RIS has $L$ passive elements indexed by $l=1,\ldots,L$. The complex reflection coefficient of element $l$ is
\begin{align}
    \theta_l = \kappa_l\, e^{j\vartheta_l},
    \qquad
    \kappa_l\in(0,1],\quad \vartheta_l\in[0,2\pi),
    \label{eq_ris_element}
\end{align}
with $\boldsymbol{\Phi}=\operatorname{diag}([\kappa_1 e^{j\vartheta_1},\ldots,\kappa_{L}e^{j\vartheta_{L}}])$, where $\kappa_l$ is the amplitude reflection coefficient and $\vartheta_l$ the controllable phase shift~\cite{Triwidyastuti_Access_2025}.

\subsection{Dual-Function Waveform and Frame Structure}

We employ a maximum-length sequence (MLS) of length $N=2^k-1$, generated by a primitive polynomial of degree $k$. After bipolar mapping, $\mathbf s=[s_1,\ldots,s_{N}]^{\mathsf T}\in\{\pm1\}^N$, extended $N$-periodically ($s_{n+N}=s_n$) so that circular indices of the form $s[(n-k)\bmod N]$ are well defined. Within each coherent processing interval (CPI), $M$ information bits $b_m\in\{0,1\}$, $m=1,\ldots,M$, are mapped to BPSK symbols $d_m=1-2b_m\in\{\pm1\}$. Each bit modulates one complete MLS pulse, so the transmitted fast-time code vector during pulse $m$ is $d_m\,\mathbf s$. The same emission therefore supports both sensing and communication.

Each PMCW pulse contains $N$ chips of duration $T_c=1/f_s$, with fast-time index $n=1,\ldots,N$ resolving delay (range) and slow-time index $m=1,\ldots,M$ resolving Doppler while carrying data. With contiguous code repetitions, $T_{\mathrm{PRI}}=NT_c$ and $T_{\mathrm{CPI}}=MNT_c$. Letting $p(t)$ be a unit-energy chip pulse with chip energy $E_c$, the transmitted baseband signal is
\begin{align}
    x_{\mathrm{BB}}(t)
    =
    \sqrt{E_c}
    \sum_{m=1}^{M}d_m
    \sum_{n=1}^{N}s_n\,
    p\!\left(t-nT_c-mT_{\mathrm{PRI}}\right).
\end{align}
Since one bit spans all $N$ chips, the bit energy is $E_b=NE_c$. The nominal direct-path range/velocity resolutions and unambiguous spans are $\Delta R\simeq c_0T_c/2$, $R_{\max}=c_0T_{\mathrm{PRI}}/2$, $\Delta v=\lambda/(2MT_{\mathrm{PRI}})$, and $v_{\max}=\lambda/(4T_{\mathrm{PRI}})$, with $\lambda=c_0/f_c$.

\subsection{RIS-assisted ISAC Channel Models}

The system comprises five links: $\mathsf R\!\leftrightarrow\!\mathsf U$, $\mathsf R\!\leftrightarrow\!\mathsf I$, $\mathsf I\!\leftrightarrow\!\mathsf U$ (communication), and $\mathsf R\!\leftrightarrow\!\mathsf T$, $\mathsf I\!\leftrightarrow\!\mathsf T$ (sensing).

\subsubsection{Communication links}
A generic communication channel between nodes $\mathsf A$ and $\mathsf B$ is
\begin{align}
    h_{\mathsf{AB}}
    =
    \sqrt{\beta_{\mathsf{AB}}}\,a_{\mathsf{AB}}e^{j\phi_{\mathsf{AB}}},
    \quad
    \beta_{\mathsf{AB}}=\Psi_{\mathsf{AB}}(d_{\mathsf{AB}}/d_0)^{-\alpha_{\mathsf{AB}}},
\end{align}
where $\beta_{\mathsf{AB}}$ is the one-way large-scale gain, $a_{\mathsf{AB}}\sim\operatorname{Nakagami}(m_{\mathsf{AB}},1)$ the fading amplitude, and $\phi_{\mathsf{AB}}\sim\mathcal U[0,2\pi)$ the fading phase, independent of $a_{\mathsf{AB}}$ \cite[Eq.~(3.13)]{Goldsmith2005}. This applies to the direct channel $h_{\mathsf{RU}}$ and, per element $l$, to $h_{\mathsf{RI},l}$ and $h_{\mathsf{IU},l}$. Under the far-field approximation, $\beta_{\mathsf{RI},l}\!\simeq\!\beta_{\mathsf{RI}}$, $\beta_{\mathsf{IU},l}\!\simeq\!\beta_{\mathsf{IU}}$, and reciprocity gives $h_{\mathsf{IR},l}=h_{\mathsf{RI},l}$.

\subsubsection{Sensing links}
The radar--target and RIS--target links are modeled as deterministic geometric propagation channels following the radar range equation~\cite{Richards2014}; the target's fluctuating scattering is carried separately by $\rho_{\mathsf T}=\sqrt{\sigma}\,e^{j\phi_\sigma}$.
$\sigma$ is the radar cross section, which depends on the roughness, size, and shape of the scatterer \cite[cf.~Eq.~(2.25)]{Goldsmith2005}.
The one-way coefficient is $h_{\mathsf{AT}}=\sqrt{\beta_{\mathsf{AT}}}\,e^{-j\frac{2\pi}{\lambda}d_{\mathsf{AT}}}$,
where the phase shift $e^{-j2\pi d_{\mathsf{AT}}/\lambda}$ is due to the distance $d_{\mathsf{AT}}$ the wave travels~\cite[Eq.~(2.6)]{Goldsmith2005} and $\beta_{\mathsf{AT}}$ follows a log-distance model with range-dependent atmospheric and system losses~\cite{Richards2014}. 
Based on the radar range equation, under reciprocity, the monostatic radar--target--radar coefficient is
\begin{align}
    \xi_0
    =
    \rho_{\mathsf T}\,h_{\mathsf{RT}}^2
    =
    \rho_{\mathsf T}\,\beta_{\mathsf{RT}}\,e^{-j\frac{4\pi}{\lambda}d_{\mathsf{RT}}},
    \quad
    |\xi_0|^2=\sigma\beta_{\mathsf{RT}}^2.
    \label{eq_direct_target_coefficient}
\end{align}

\subsection{Communication-Centric RIS Configuration}

The RIS phases are selected to coherently combine the direct and RIS-assisted communication components at the UE:
\begin{align}
    \vartheta_l^\star
    =
    \phi_{\mathsf{RU}}
    -\phi_{\mathsf{RI},l}
    -\phi_{\mathsf{IU},l}
    \pmod{2\pi},
    \quad
    \theta_l^\star=\kappa_l e^{j\vartheta_l^\star}.
    \label{eq_comm_ris_phase}
\end{align}
This choice aligns every RIS-assisted path with the direct path at the UE, as the following result formalizes.

\begin{lemma}[Coherent communication combining]
\label{lem_eff_amp}
Under the communication-centric phase configuration \eqref{eq_comm_ris_phase} and the far-field approximation, the effective communication channel $h_{\mathrm{eff}}=h_{\mathsf{RU}}+\sum_{l=1}^{L}h_{\mathsf{IU},l}\theta_l^\star h_{\mathsf{RI},l}$ factors as $h_{\mathrm{eff}}=e^{j\phi_{\mathsf{RU}}}A_c$, where the effective amplitude
\begin{align}
    A_c
    \triangleq
    |h_{\mathrm{eff}}|
    =
    \sqrt{\beta_{\mathsf{RU}}}\,a_{\mathsf{RU}}
    +
    \sqrt{\beta_{\mathsf{RI}}\beta_{\mathsf{IU}}}
    \sum_{l=1}^{L}
    \kappa_l\,a_{\mathsf{RI},l}a_{\mathsf{IU},l}
    \label{eq_effective_amplitude}
\end{align}
is a real, nonnegative, phase-aligned superposition of the direct and the $L$ RIS-assisted paths.
\end{lemma}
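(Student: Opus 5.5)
The plan is a direct substitution argument. We write each channel in polar form using the generic communication-link model, $h_{\mathsf{RU}}=\sqrt{\beta_{\mathsf{RU}}}\,a_{\mathsf{RU}}e^{j\phi_{\mathsf{RU}}}$, $h_{\mathsf{RI},l}=\sqrt{\beta_{\mathsf{RI},l}}\,a_{\mathsf{RI},l}e^{j\phi_{\mathsf{RI},l}}$ and $h_{\mathsf{IU},l}=\sqrt{\beta_{\mathsf{IU},l}}\,a_{\mathsf{IU},l}e^{j\phi_{\mathsf{IU},l}}$. The far-field approximation then lets us replace $\beta_{\mathsf{RI},l}$ and $\beta_{\mathsf{IU},l}$ by the common values $\beta_{\mathsf{RI}}$ and $\beta_{\mathsf{IU}}$.

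Next we expand the $l$-th cascaded term as
\begin{align*}
h_{\mathsf{IU},l}\theta_l^\star h_{\mathsf{RI},l}
=\sqrt{\beta_{\mathsf{RI}}\beta_{\mathsf{IU}}}\,\kappa_l a_{\mathsf{RI},l}a_{\mathsf{IU},l}\,
e^{j(\phi_{\mathsf{IU},l}+\vartheta_l^\star+\phi_{\mathsf{RI},l})}.
\end{align*}
Substituting \eqref{eq_comm_ris_phase}, the exponent equals $\phi_{\mathsf{RU}}+2\pi k_l$ for some integer $k_l$, because the phase is defined only modulo $2\pi$. Since $e^{j2\pi k_l}=1$, every cascaded term carries the common phase factor $e^{j\phi_{\mathsf{RU}}}$. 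The same factor appears in the direct term, so we can pull it out of the sum and obtain $h_{\mathrm{eff}}=e^{j\phi_{\mathsf{RU}}}A_c$, with $A_c$ given by \eqref{eq_effective_amplitude}.

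Finally we justify that $A_c$ is real and nonnegative and that it equals $|h_{\mathrm{eff}}|$. Each summand is a product of nonnegative reals: $\sqrt{\beta}>0$, the Nakagami amplitudes are $\ge 0$, and $\kappa_l\in(0,1]$. Hence $A_c\ge0$, and taking the modulus gives $|h_{\mathrm{eff}}|=|e^{j\phi_{\mathsf{RU}}}|\,|A_c|=A_c$.

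There is no real obstacle here. The only points needing care are the modulo-$2\pi$ reduction and stating explicitly where the far-field approximation is used, since without it the path-loss factors $\sqrt{\beta_{\mathsf{RI},l}\beta_{\mathsf{IU},l}}$ would stay inside the sum with element-dependent values.
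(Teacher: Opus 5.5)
Your proposal is correct and follows the paper's proof essentially step for step. Both substitute the polar forms, use \eqref{eq_comm_ris_phase} to align every cascaded phase with $\phi_{\mathsf{RU}}$, factor out $e^{j\phi_{\mathsf{RU}}}$, invoke the far-field approximation, and conclude from nonnegativity of all amplitudes; you additionally spell out the modulo-$2\pi$ reduction and the identity $|h_{\mathrm{eff}}|=A_c$, which the paper leaves implicit.
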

\begin{proof}
Substituting $h_{\mathsf{RU}}=\sqrt{\beta_{\mathsf{RU}}}\,a_{\mathsf{RU}}e^{j\phi_{\mathsf{RU}}}$, $h_{\mathsf{RI},l}=\sqrt{\beta_{\mathsf{RI},l}}\,a_{\mathsf{RI},l}e^{j\phi_{\mathsf{RI},l}}$, $h_{\mathsf{IU},l}=\sqrt{\beta_{\mathsf{IU},l}}\,a_{\mathsf{IU},l}e^{j\phi_{\mathsf{IU},l}}$, and $\theta_l^\star=\kappa_l e^{j\vartheta_l^\star}$, the phase of the $l$-th cascaded term is $\phi_{\mathsf{RI},l}+\vartheta_l^\star+\phi_{\mathsf{IU},l}$. By \eqref{eq_comm_ris_phase} this equals $\phi_{\mathsf{RU}}$ for every $l$. Factoring the common phase $e^{j\phi_{\mathsf{RU}}}$ and applying $\beta_{\mathsf{RI},l}\simeq\beta_{\mathsf{RI}}$, $\beta_{\mathsf{IU},l}\simeq\beta_{\mathsf{IU}}$ yields \eqref{eq_effective_amplitude}; all amplitudes are nonnegative, so the terms add coherently.
\end{proof}

The coherent amplitude addition in \eqref{eq_effective_amplitude} directly determines how the received power scales with the array size.

\begin{corollary}[$L^2$ array gain]
\label{cor_l2_gain}
With identical statistics across elements ($\kappa_l=\kappa$, $m_{\mathsf{RI}},m_{\mathsf{IU}}$ fixed), the mean received power satisfies $\mathbb E[|h_{\mathrm{eff}}|^2]=\Theta(L^2)$, i.e., a gain of $20\log_{10}L$ dB relative to a single-element reference.
\end{corollary}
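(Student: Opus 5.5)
Starting from Lemma~\ref{lem_eff_amp}, we have $|h_{\mathrm{eff}}|^2=A_c^2$. The plan is to expand $\mathbb E[A_c^2]$ directly in terms of Nakagami moments. For $a\sim\operatorname{Nakagami}(m,1)$ we have $\mathbb E[a^2]=1$ and
\begin{align}
\mu(m)\triangleq\mathbb E[a]=\frac{\Gamma(m+1/2)}{\Gamma(m)\sqrt{m}}\in(0,1).
\end{align}
The statement only makes sense if the fading amplitudes across links and across elements are mutually independent, so I would state that assumption explicitly. Write $A_c=X+\kappa\sqrt{\beta_{\mathsf{RI}}\beta_{\mathsf{IU}}}\,S_L$, where $X=\sqrt{\beta_{\mathsf{RU}}}\,a_{\mathsf{RU}}$ and $S_L=\sum_{l=1}^L Z_l$ with $Z_l=a_{\mathsf{RI},l}a_{\mathsf{IU},l}$. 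The $Z_l$ are i.i.d., with $\mathbb E[Z_l]=\mu_{\mathsf{RI}}\mu_{\mathsf{IU}}$ and $\mathbb E[Z_l^2]=1$.

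Squaring and taking expectations term by term gives
\begin{align}
\mathbb E[A_c^2]=\beta_{\mathsf{RU}}+2\kappa\sqrt{\beta_{\mathsf{RU}}\beta_{\mathsf{RI}}\beta_{\mathsf{IU}}}\,\mu_{\mathsf{RU}}\,L\,\mu_{\mathsf{RI}}\mu_{\mathsf{IU}}+\kappa^2\beta_{\mathsf{RI}}\beta_{\mathsf{IU}}\,\mathbb E[S_L^2],
\end{align}
with
\begin{align}
\mathbb E[S_L^2]=L+L(L-1)\,\mu_{\mathsf{RI}}^2\mu_{\mathsf{IU}}^2 .
\end{align}
This yields an exact quadratic polynomial in $L$ whose leading coefficient is $\kappa^2\beta_{\mathsf{RI}}\beta_{\mathsf{IU}}\mu_{\mathsf{RI}}^2\mu_{\mathsf{IU}}^2$. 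That coefficient is strictly positive and does not depend on $L$, because $m_{\mathsf{RI}}$, $m_{\mathsf{IU}}$ and $\kappa$ are held fixed.

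For the $\Theta(L^2)$ claim, both bounds follow from this polynomial. The lower bound comes from the leading term alone, since every term is nonnegative. The upper bound holds because each coefficient is finite, giving at most $C L^2$ for $L\ge1$. Dividing by $L^2$ shows that the ratio tends to the leading coefficient. Hence the RIS-dominated part of the power, relative to a single-element reference (the $L=1$ value of that cascade term, or equivalently its per-element coherent contribution), grows like $L^2$, that is, $10\log_{10}L^2=20\log_{10}L$ dB asymptotically.

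The only delicate point is interpreting ``relative to a single-element reference.'' The direct term and the cross term are $O(1)$ and $O(L)$ respectively, so the exact gain is $20\log_{10}L$ dB only asymptotically. Incoherent variance terms of order $L$ also enter, with weight $1-\mu_{\mathsf{RI}}^2\mu_{\mathsf{IU}}^2$. I would handle this by stating the gain as $\lim_{L\to\infty}\mathbb E[A_c^2]/(L^2 c_1)=1$, where $c_1$ is the single-element coherent power $\kappa^2\beta_{\mathsf{RI}}\beta_{\mathsf{IU}}\mu_{\mathsf{RI}}^2\mu_{\mathsf{IU}}^2$. I would also note that as $m\to\infty$ we get $\mu\to1$, so the $L^2$ scaling becomes exact in the deterministic limit.
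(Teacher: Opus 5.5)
Your proof is correct, but it follows a different route from the paper's. The paper never evaluates $\mathbb E[A_c^2]$ exactly. For the lower bound it computes only the first moment, $\mathbb E[A_c]=\sqrt{\beta_{\mathsf{RU}}}\,\mathbb E[a_{\mathsf{RU}}]+L\kappa\sqrt{\beta_{\mathsf{RI}}\beta_{\mathsf{IU}}}\,\mathbb E[a_{\mathsf{RI}}]\mathbb E[a_{\mathsf{IU}}]=\Theta(L)$, and applies $\mathbb E[A_c^2]\ge(\mathbb E[A_c])^2$. For the upper bound it simply points to the recursion of Lemma~\ref{lem_moments}, whose $r=2$ case, unrolled, is exactly your quadratic.

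Your closed form gives both bounds at once without that forward reference. It also identifies the leading constant, which turns ``$20\log_{10}L$ dB'' into a precise limit statement. The two arguments are linked by $\mathbb E[A_c^2]=(\mathbb E[A_c])^2+\operatorname{var}(A_c)$. Your formula shows that $\operatorname{var}(A_c)=\beta_{\mathsf{RU}}(1-\mu_{\mathsf{RU}}^2)+L\kappa^2\beta_{\mathsf{RI}}\beta_{\mathsf{IU}}(1-\mu_{\mathsf{RI}}^2\mu_{\mathsf{IU}}^2)$ grows only linearly in $L$. That is why the paper's Jensen bound loses nothing at leading order.

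What the paper's route buys in return is robustness. Its lower bound uses only $\mathbb E[a_{\mathsf{RI},l}a_{\mathsf{IU},l}]>0$ and needs no independence across elements. A Cauchy--Schwarz upper bound $\mathbb E[A_c^2]\le(L+1)\big(\mathbb E[X_0^2]+\sum_l\mathbb E[X_l^2]\big)$ needs none either. So your remark that the statement ``only makes sense'' under mutual independence is too strong: independence is needed for your exact polynomial, not for $\Theta(L^2)$.

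One small slip concerns the reference power. The $L=1$ value of the cascade power is $\kappa^2\beta_{\mathsf{RI}}\beta_{\mathsf{IU}}$, which is not ``equivalent'' to your per-element coherent power $c_1$. The two differ by the constant factor $\mu_{\mathsf{RI}}^2\mu_{\mathsf{IU}}^2$, about $0.9$~dB for the simulated $m_{\mathsf{RI}}=3$, $m_{\mathsf{IU}}=2$. Both give the $20$~dB/decade slope, but only normalizing by $c_1$, as in your final formulation, yields exactly $20\log_{10}L$ dB asymptotically.
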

\begin{proof}
By Lemma~\ref{lem_eff_amp} and linearity of expectation, $\mathbb E[A_c]=\sqrt{\beta_{\mathsf{RU}}}\,\mathbb E[a_{\mathsf{RU}}]+L\kappa\sqrt{\beta_{\mathsf{RI}}\beta_{\mathsf{IU}}}\,\mathbb E[a_{\mathsf{RI}}]\mathbb E[a_{\mathsf{IU}}]$, using independence of $a_{\mathsf{RI},l}$ and $a_{\mathsf{IU},l}$. The RIS term grows linearly in $L$ with a strictly positive mean (Nakagami amplitudes have $\mathbb E[a]>0$), so $\mathbb E[A_c]=\Theta(L)$ and $\mathbb E[A_c^2]\geq(\mathbb E[A_c])^2=\Theta(L^2)$. Conversely $\mathbb E[A_c^2]=\mathcal O(L^2)$ from \eqref{eq_moment_recursion}, hence $\mathbb E[A_c^2]=\Theta(L^2)$.
\end{proof}

\subsection{Proposed Received Signal Modelling}

After unit-energy chip-matched filtering, the noise samples at the UE and radar are i.i.d. $\mathcal{CN}(0,N_0)$. With the one-way radar-to-UE communication delay $\tau_{\mathsf{RU}}=d_{\mathsf{RU}}/c_0$ and its delay bin $\ell_u=\lfloor\tau_{\mathsf{RU}}/T_c\rceil+1$,
the UE chip sample is
\begin{align}
    r_{\mathrm{ue}}[n,m]
    =
    \sqrt{E_c}\,h_{\mathrm{eff}}\,
    s[(n-\ell_u+1)\bmod N]\,d_m
    +
    z_{\mathrm{ue}}[n,m].
    \label{eq_r_ue_final}
\end{align}
At the radar, the monostatic receiver collects four returns, indexed $i=0,\ldots,3$: $(0)$ the direct target return; $(1)$ the singly RIS-assisted target return; $(2)$ the static RIS self-return; and $(3)$ the doubly RIS-assisted target return.

\begin{lemma}[Effective radar return coefficients]
\label{lem_returns}
Under channel reciprocity and the stop-and-hop approximation $|f_{D,i}|NT_c\ll1$, the monostatic radar observation is
\begin{align}
    r_{\mathrm{rad}}[n,\!m]
    \!\!=\!\!&\;
    \sqrt{E_c} \!
    \sum_{i=0}^{3} \!
    \xi_i\,
    s[(n-\ell_i+1)\!\bmod\! N]\,d_m
    e^{j2\pi f_{D,i}mT_{\mathrm{PRI}}}
    \nonumber\\
    &+
    z_{\mathrm{rad}}[n,m],
    \label{eq_radar_echo}
\end{align}
where, with the one-way RIS-assisted cascade $q_{\mathsf{RIT}}=\sum_l\theta_l^\star h_{\mathsf{RI},l}h_{\mathsf{IT},l}$, the effective coefficients are
\begin{align}
    \xi_0 &= \rho_{\mathsf T}\beta_{\mathsf{RT}}e^{-j\frac{4\pi}{\lambda}d_{\mathsf{RT}}},
    & \xi_1 &= 2\rho_{\mathsf T}h_{\mathsf{RT}}q_{\mathsf{RIT}},\nonumber\\
    \xi_2 &= \textstyle\sum_l\theta_l^\star h_{\mathsf{RI},l}^2,
    & \xi_3 &= \rho_{\mathsf T}q_{\mathsf{RIT}}^2,
    \label{eq_four_returns}
\end{align}
with delays $\tau_0=\tfrac{2d_{\mathsf{RT}}}{c_0}$, $\tau_1=\tfrac{d_{\mathsf{RI}}+d_{\mathsf{IT}}+d_{\mathsf{RT}}}{c_0}$, $\tau_2=\tfrac{2d_{\mathsf{RI}}}{c_0}$, $\tau_3=\tfrac{2(d_{\mathsf{RI}}+d_{\mathsf{IT}})}{c_0}$, and Doppler shifts $f_{D,0}=-\tfrac{2v_r^{\mathsf R}}{\lambda}$, $f_{D,1}=-\tfrac{v_r^{\mathsf R}+v_r^{\mathsf I}}{\lambda}$, $f_{D,2}=0$, $f_{D,3}=-\tfrac{2v_r^{\mathsf I}}{\lambda}$, where $v_r^X=\mathbf v_{\mathsf T}^{\mathsf T}\mathbf u_{X\rightarrow\mathsf T}$ is the radial velocity toward node $X\in\{\mathsf R,\mathsf I\}$.
\end{lemma}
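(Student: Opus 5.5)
We prove the statement by writing the monostatic round-trip channel as the square of a one-way channel, expanding the square, and sorting the terms by propagation route. The one-way radar-to-target field has two routes: the direct route with coefficient $h_{\mathsf{RT}}$, and the RIS-assisted route with coefficient $q_{\mathsf{RIT}}=\sum_l\theta_l^\star h_{\mathsf{RI},l}h_{\mathsf{IT},l}$. The target scatters the incident field with $\rho_{\mathsf T}$ and re-radiates it. By reciprocity ($h_{\mathsf{IR},l}=h_{\mathsf{RI},l}$, $h_{\mathsf{TI},l}=h_{\mathsf{IT},l}$, $h_{\mathsf{TR}}=h_{\mathsf{RT}}$), the return from the target to the radar has the same two-route coefficient. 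The target-involving part of the round trip is therefore
\[
\rho_{\mathsf T}\,(h_{\mathsf{RT}}+q_{\mathsf{RIT}})^2 = \rho_{\mathsf T}h_{\mathsf{RT}}^2 + 2\rho_{\mathsf T}h_{\mathsf{RT}}q_{\mathsf{RIT}} + \rho_{\mathsf T}q_{\mathsf{RIT}}^2 .
\]
Separately, the radar illuminates the RIS and receives its reflection directly, without any target interaction. This gives $\sum_l h_{\mathsf{RI},l}\theta_l^\star h_{\mathsf{IR},l}=\sum_l\theta_l^\star h_{\mathsf{RI},l}^2$.

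Matching terms then gives $\xi_0$ through $\xi_3$. For $\xi_0$, the first term $\rho_{\mathsf T}h_{\mathsf{RT}}^2$ equals $\rho_{\mathsf T}\beta_{\mathsf{RT}}e^{-j4\pi d_{\mathsf{RT}}/\lambda}$ by \eqref{eq_direct_target_coefficient}. For $\xi_1$, the cross term comes from the two mirrored routes $\mathsf R\!\to\!\mathsf T\!\to\!\mathsf I\!\to\!\mathsf R$ and $\mathsf R\!\to\!\mathsf I\!\to\!\mathsf T\!\to\!\mathsf R$. These routes have equal length, so they share one delay and one Doppler shift, and they add to give the factor $2$. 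The squared cascade gives $\xi_3$, and the self-return gives $\xi_2$.

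Delays come from summing the path lengths and dividing by $c_0$. Routes $0$ and $3$ traverse their one-way path twice, route $1$ traverses $d_{\mathsf{RI}}+d_{\mathsf{IT}}+d_{\mathsf{RT}}$, and route $2$ traverses $2d_{\mathsf{RI}}$. The far-field approximation lets the per-element path differences be absorbed into the phases of $h_{\mathsf{RI},l}$ and $h_{\mathsf{IT},l}$, so that each route has a single delay.

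For the Doppler shifts, consider each leg of a route separately. The stationary legs ($\mathsf{R}\!\leftrightarrow\!\mathsf I$) contribute nothing. Each leg that ends or begins at the moving target contributes $-v_r^X/\lambda$, where $X$ is the node at the other end of that leg. Route $0$ has two target legs toward $\mathsf R$, which gives $-2v_r^{\mathsf R}/\lambda$. Route $1$ has one target leg toward $\mathsf R$ and one toward $\mathsf I$, which gives $-(v_r^{\mathsf R}+v_r^{\mathsf I})/\lambda$. Route $3$ has two target legs toward $\mathsf I$, which gives $-2v_r^{\mathsf I}/\lambda$. Route $2$ has no target leg, so $f_{D,2}=0$.

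The remaining step is the passage from continuous time to the chip samples. Under stop-and-hop, the target is treated as fixed within a pulse, so the phase $e^{j2\pi f_{D,i}t}$ is evaluated at the pulse epoch $mT_{\mathrm{PRI}}$. This is justified because $|f_{D,i}|NT_c\ll1$, and it produces the slow-time factor $e^{j2\pi f_{D,i}mT_{\mathrm{PRI}}}$. After chip-matched filtering, the delay $\tau_i$ is quantized to the bin $\ell_i=\lfloor\tau_i/T_c\rceil+1$. Because the transmission is contiguous, the pulses are periodic and the delayed code reads as $s[(n-\ell_i+1)\bmod N]$. The data symbol $d_m$ stays attached to pulse $m$, exactly as in \eqref{eq_r_ue_final}. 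Adding the noise $z_{\mathrm{rad}}$ then yields \eqref{eq_radar_echo}.

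The main obstacle is justifying the modular code indexing while $d_m$ stays pinned to pulse $m$. When $\tau_i$ is nonzero, a window of fast-time samples can straddle two consecutive pulses, and those pulses may carry different bits. I would handle this in the same way as the UE model \eqref{eq_r_ue_final}. The receiver window is assumed aligned per bin, or equivalently the delay is assumed to satisfy $\tau_i\ll T_{\mathrm{PRI}}$ relative to the alignment, so the boundary effect is neglected. This assumption would be stated explicitly as part of the idealization behind the model.
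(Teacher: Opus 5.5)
Your proposal is correct and follows essentially the paper's route. The paper also multiplies one-way coefficients along each path, uses reciprocity to merge the two mirrored singly RIS-assisted routes into the factor $2$, reads delays off summed path lengths and Dopplers as $-v_r^X/\lambda$ per target leg, and invokes stop-and-hop. Your $\rho_{\mathsf T}(h_{\mathsf{RT}}+q_{\mathsf{RIT}})^2$ expansion is compact bookkeeping for the same enumeration, and your treatment of delay quantization and of the pulse-boundary effect on $d_m$ goes beyond what the paper's proof addresses.

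One small correction there: $\tau_i\ll T_{\mathrm{PRI}}$ is not equivalent to window alignment, and it fails in the paper's own geometry ($\tau_0\approx3.6\,\mu$s versus $T_{\mathrm{PRI}}=6.35\,\mu$s). The fixed-grid model \eqref{eq_radar_echo} is therefore best stated as neglecting bit transitions between consecutive pulses. It is exact only when $d_{m-1}=d_m$, or with a per-pulse cyclic prefix covering the largest delay.
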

\begin{proof}
Each return is the product of the one-way coefficients along its path together with the target scattering coefficient $\rho_{\mathsf T}$ where the target is traversed. For return $(0)$, reciprocity gives $h_{\mathsf{TR}}=h_{\mathsf{RT}}$, so $\xi_0=\rho_{\mathsf T}h_{\mathsf{RT}}^2=\rho_{\mathsf T}\beta_{\mathsf{RT}}e^{-j4\pi d_{\mathsf{RT}}/\lambda}$. The two singly RIS-assisted paths $\mathsf R\!\to\!\mathsf I\!\to\!\mathsf T\!\to\!\mathsf R$ and $\mathsf R\!\to\!\mathsf T\!\to\!\mathsf I\!\to\!\mathsf R$ are reciprocal and share the same delay, Doppler, and scalar coefficient $\rho_{\mathsf T}h_{\mathsf{RT}}q_{\mathsf{RIT}}$, hence add to $\xi_1=2\rho_{\mathsf T}h_{\mathsf{RT}}q_{\mathsf{RIT}}$. The self-return traverses each $h_{\mathsf{RI},l}$ twice, giving $\xi_2=\sum_l\theta_l^\star h_{\mathsf{RI},l}^2$ (no target, no $\rho_{\mathsf T}$). The doubly RIS-assisted path traverses the cascade in both directions, giving $\xi_3=\rho_{\mathsf T}q_{\mathsf{RIT}}^2$. Delays follow from the summed geometric path lengths divided by $c_0$, and each one-way bounce off a moving target contributes a Doppler $-v_r^X/\lambda$ per node $X$ in the path. Under $|f_{D,i}|NT_c\ll1$, the Doppler phase is constant within a pulse.
\end{proof}

The communication-centric phases that benefit the UE have a markedly different effect on the radar returns, as the next remark makes precise.

\begin{remark}[RIS is non-coherent for sensing]
\label{rem_noncoherent}
Substituting \eqref{eq_comm_ris_phase}, the residual sensing phase of the $l$-th cascade term is $\psi_{\mathsf{IT},l}=\phi_{\mathsf{RU}}-\phi_{\mathsf{IU},l}-\frac{2\pi}{\lambda}d_{\mathsf{IT},l}$, which is generally not aligned across $l$. Hence the RIS-assisted returns $\xi_1,\xi_3$ and the self-return $\xi_2$ combine non-coherently, in contrast to the coherent gain of Lemma~\ref{lem_eff_amp} at the UE.
\end{remark}


\section{Proposed Receiver Processing}
\label{sec_receiver}

We consider a dual-branch chain: a sensing branch at the radar and a communication branch at the UE.

\subsection{Sensing Function: Range--Doppler Processing}

Since the radar knows the transmitted BPSK symbols, it removes them via $y[n,m]=d_m^{*}r_{\mathrm{rad}}[n,m]$; as $d_m^*=d_m$ and $|d_m|^2=1$, the noise statistics are unchanged. Each pulse is then circularly correlated with the known MLS,
\begin{align}
    g[k,m]
    =
    \frac1N
    \sum_{n=1}^{N}
    y[n,m]\,s[(n-k+1)\bmod N],
\end{align}
whose normalized periodic autocorrelation is $R_s[\delta]=1$ for $\delta=0$ and $-1/N$ otherwise. A return with delay $\ell_i$ thus produces a peak at $k=\ell_i$, with post-correlation noise variance $N_0/N$. The delay maps to the equivalent range $R_k=c_0(k-1)T_c/2$, so that bin $k=1$ corresponds to zero range, giving range spacing $\Delta R=c_0T_c/2$.

After an optional slow-time window $w[m]$, an $M$-point slow-time DFT with coherent-gain normalization is applied per range bin,
\begin{align}
    G[k,p]
    =
    \frac{1}{\sum_{m=1}^{M}w[m]}
    \sum_{m=1}^{M}
    w[m]\,g[k,m]\,e^{-j2\pi pm/M},
\end{align}
yielding the range--Doppler power map $P_{\mathrm{RD}}[k,p]=|G[k,p]|^2$, with Doppler resolution $\Delta f_D=1/(MT_{\mathrm{PRI}})$. Targets are extracted by a two-dimensional cell-averaging constant false-alarm rate (CFAR) detector~\cite{Barkat2005}. For the direct monostatic return, $f_{D,0}=-2v_r^{\mathsf R}/\lambda$, so the detected Doppler bin $\widehat p$ gives $\widehat v_r^{\mathsf R}=-\lambda\widehat p/(2MT_{\mathrm{PRI}})$.

\subsection{Communication Function: Coherent BPSK Detection}

\emph{Code-delay synchronization.}
The UE first acquires the direct-path code delay $\ell_u$ by maximizing the noncoherent per-bin correlation energy accumulated over the CPI,
\begin{align}
    \widehat\ell_u
    =
    \arg\max_{1\leq k\leq N}
    \sum_{m=1}^{M}
    \left|
    \sum_{n=1}^{N}
    r_{\mathrm{ue}}[n,m]\,s^{*}[(n-k+1)\bmod N]
    \right|^2,
    \label{eq_delay_estimate}
\end{align}
where squaring before the slow-time sum removes the unknown data symbol $d_m$ and channel phase. 
From \eqref{eq_r_ue_final}, with the estimated delay $\widehat\ell_u$ the UE despreads each pulse with the delay-aligned MLS,
\begin{align}
    u_m
    &=
    \frac1N
    \sum_{n=1}^{N}
    r_{\mathrm{ue}}[n,m]\,
    s^{*}[(n-\widehat\ell_u+1)\bmod N]
    \nonumber\\
    &=
    \sqrt{E_c}\,h_{\mathrm{eff}}\,d_m+\nu_m,
\end{align}
with $\nu_m\sim\mathcal{CN}(0,N_0/N)$. Using a prefix of $M_{\mathrm{pl}}$ known pilot symbols $d_m$, $m\in\mathcal P$ with $|\mathcal P|=M_{\mathrm{pl}}$, the effective channel is estimated by least squares,
\begin{align}
    \widehat h_{\mathrm{eff}}
    =
    \frac{\sum_{m\in\mathcal P}d_m^{*}u_m}{\sqrt{E_c}\,M_{\mathrm{pl}}}.
\end{align}
Coherent detection then forms the decision variable $\Lambda_m=\operatorname{Re}\{\widehat h_{\mathrm{eff}}^{*}u_m\}$ and decides $\widehat d_m=+1$ ($\widehat b_m=0$) if $\Lambda_m\geq0$ and $\widehat d_m=-1$ ($\widehat b_m=1$) otherwise.

\section{Performance Analysis}
\label{sec_performance_analysis}

\subsection{Integrated Sensing SNR}

Let $w[m]$ be the slow-time window, with coherent processing efficiency
\begin{align}
    L_w
    \triangleq
    \frac{\left|\sum_{m=1}^{M}w[m]\right|^2}
         {M\sum_{m=1}^{M}|w[m]|^2}\in(0,1],
\end{align}
and $L_w=1$ for a rectangular window. We first quantify the sensing gain delivered by the matched-filter and Doppler-processing chain.

\begin{lemma}[Instantaneous integrated sensing SNR]
\label{lem_sensing_snr}
For the direct target return, the instantaneous SNR at the output of the length-$N$ range correlator and the normalized $M$-point slow-time DFT, evaluated at an on-grid range--Doppler cell, is
\begin{align}
    \chi_0
    =
    NM L_w\frac{E_c}{N_0}|\xi_0|^2
    =
    NM L_w\frac{E_c}{N_0}\sigma\beta_{\mathsf{RT}}^2.
    \label{eq_sensing_snr}
\end{align}
\end{lemma}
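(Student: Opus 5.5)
The plan is to track signal and noise separately through the two linear stages, isolating the $i=0$ term of \eqref{eq_radar_echo}. The other returns, at different delay or Doppler bins, are treated as off-cell, and the $-1/N$ sidelobes are neglected as interference. This is the usual convention for an on-grid cell SNR. The argument has three steps.

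First, data removal. Since $d_m\in\{\pm1\}$, multiplying by $d_m^*$ gives $y[n,m]=\sqrt{E_c}\,\xi_0\,s[(n-\ell_0+1)\bmod N]\,e^{j2\pi f_{D,0}mT_{\mathrm{PRI}}}+\tilde z[n,m]$. Here $\tilde z=d_m z_{\mathrm{rad}}$ is still i.i.d. $\mathcal{CN}(0,N_0)$.

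Second, range correlation. At $k=\ell_0$ we use $s^2=1$, i.e. $R_s[0]=1$. The signal part of $g[\ell_0,m]$ is then $\sqrt{E_c}\,\xi_0 e^{j2\pi f_{D,0}mT_{\mathrm{PRI}}}$; the within-pulse phase is constant by the stop-and-hop assumption of Lemma~\ref{lem_returns}. The noise part is $\frac1N\sum_n \tilde z[n,m]s[\cdot]$. Because the $\tilde z[n,m]$ are independent with variance $N_0$ and $|s|=1$, this noise has variance $N_0/N$. It is independent across $m$, since the pulses use disjoint noise samples.

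Third, the slow-time DFT. Take the on-grid condition to mean $f_{D,0}T_{\mathrm{PRI}}=p_0/M$ for some integer bin $p_0$. At $p=p_0$ the Doppler phase cancels, so the signal term is $\frac{\sum_m w[m]}{\sum_m w[m]}\sqrt{E_c}\xi_0=\sqrt{E_c}\xi_0$, with power $E_c|\xi_0|^2$. The noise term is a weighted sum of independent variables with variance $N_0/N$, and the unit-modulus DFT exponentials do not affect its variance. Its variance is therefore
\begin{align}
\frac{N_0}{N}\,\frac{\sum_m|w[m]|^2}{\left|\sum_m w[m]\right|^2}=\frac{N_0}{NML_w},
\end{align}
using the definition of $L_w$. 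Dividing the signal power by this variance gives $\chi_0=NML_w\frac{E_c}{N_0}|\xi_0|^2$. Substituting $|\xi_0|^2=\sigma\beta_{\mathsf{RT}}^2$ from \eqref{eq_direct_target_coefficient} gives the second form in \eqref{eq_sensing_snr}.

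The main obstacle is justifying the isolation of the direct return. The returns $i=1,2,3$ and the $-1/N$ autocorrelation sidelobes all leak into cell $(\ell_0,p_0)$. I would state explicitly that $\chi_0$ is a signal-to-noise ratio, not a signal-to-interference-plus-noise ratio. I would also assume the delay bins $\ell_i$ differ from $\ell_0$, or that the Doppler bins differ, so that the leakage is at most $\mathcal O(1/N)$ relative to the peak. Everything else in the argument is a direct variance computation.
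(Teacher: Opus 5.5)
Your proposal is correct and takes the same route as the paper. You remove the symbols with the noise statistics unchanged, obtain a range-compression gain of $N$ from $R_s[0]=1$ with post-correlation noise variance $N_0/N$, apply a slow-time integration gain of $ML_w$, and substitute $|\xi_0|^2=\sigma\beta_{\mathsf{RT}}^2$. Your explicit computation of the normalized-DFT noise variance, $\frac{N_0}{N}\sum_m|w[m]|^2/|\sum_m w[m]|^2=N_0/(NML_w)$, rigorously justifies what the paper simply states as ``a further gain factor $ML_w$,'' and your restriction to the direct echo matches the paper's own caveat.
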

\begin{proof}
After removal of the known symbols ($d_m^*d_m=1$, noise statistics unchanged), the per-chip SNR of the direct return is $E_c|\xi_0|^2/N_0$. Circular correlation with the length-$N$ MLS coherently sums $N$ chips at the peak $R_s[0]=1$ while reducing the noise variance to $N_0/N$, yielding a range-compression gain of $N$. The slow-time DFT coherently integrates $M$ pulses with window efficiency $L_w$, contributing a further gain factor $ML_w$. Multiplying these gains yields the first line of \eqref{eq_sensing_snr}. Substituting $|\xi_0|^2=\sigma\beta_{\mathsf{RT}}^2$ from Lemma~\ref{lem_returns} gives the final expression, which accounts for the direct echo only. 
\end{proof}

This result in Lemma \ref{lem_sensing_snr} underscores a fundamental sensing--communication tradeoff: the RIS configuration optimized for communication gain not only provides no sensing advantage but causes the RIS-assisted returns $\xi_1,\xi_2,\xi_3$ to appear as clutter noise rather than coherent gain, as illustrated in Fig.~\ref{fig_rdmap}.

\begin{remark}
Since the direct path is deterministic and the RCS is fixed, $\chi_0$ is constant across noise realizations. By Remark~\ref{rem_noncoherent}, the RIS-assisted returns are mis-phased and do not enhance $\chi_0$; the sensing branch therefore performs essentially as in the no-RIS case, while the self-return $\xi_2$ adds zero-Doppler clutter.
\end{remark}

\subsection{Communication SNR and BER}

Since one bit occupies all $N$ chips, $E_b=NE_c$, and the post-despreading bit SNR at the UE is
\begin{align}
    \gamma_b
    =
    \frac{E_b|h_{\mathrm{eff}}|^2}{N_0}
    =
    \overline\gamma A_c^2,
    \qquad
    \overline\gamma\triangleq\frac{E_b}{N_0}=N\frac{E_c}{N_0}.
    \label{eq_bit_snr}
\end{align}
Under coherent BPSK with perfect timing and channel knowledge, $P_b(\gamma_b)=Q(\sqrt{2\gamma_b})$, and the fading-averaged BER is $\overline P_b=\mathbb E_{\gamma_b}[Q(\sqrt{2\gamma_b})]$.

\subsubsection{Amplitude Moments}

From \eqref{eq_effective_amplitude}, $A_c=X_0+\sum_{l=1}^{L}X_l$ with $X_0\triangleq\sqrt{\beta_{\mathsf{RU}}}\,a_{\mathsf{RU}}$ and $X_l\triangleq\kappa_l\sqrt{\beta_{\mathsf{RI}}\beta_{\mathsf{IU}}}\,a_{\mathsf{RI},l}a_{\mathsf{IU},l}$, all nonnegative and mutually independent. For $a\sim\operatorname{Nakagami}(m,1)$, define the normalized $r$-th raw moment~\cite[cf.~Eq.~(2.169)]{Barkat2005}
\begin{align}
    \mathcal M_r(m)
    \triangleq
    \frac{\Gamma(m+\tfrac r2)}{\Gamma(m)}\,m^{-r/2},
\end{align}
so the component moments are $\mu_{0,r}=\mathbb{E}[X_0^r]=\beta_{\mathsf{RU}}^{r/2}\mathcal M_r(m_{\mathsf{RU}})$ and $\mu_{l,r}=\mathbb{E}[X_l^r]=\kappa_l^r(\beta_{\mathsf{RI}}\beta_{\mathsf{IU}})^{r/2}\mathcal M_r(m_{\mathsf{RI}})\mathcal M_r(m_{\mathsf{IU}})$.

\begin{lemma}[Exact amplitude-moment recursion]
\label{lem_moments}
Let $S_q=X_0+\sum_{l=1}^{q}X_l$ and $\mathcal A_{q,r}\triangleq\mathbb E[S_q^r]$. Then, for $r=1,\ldots,4$,
\begin{align}
    \mathcal A_{q,r}
    =
    \sum_{t=0}^{r}\binom{r}{t}\mathcal A_{q-1,r-t}\mu_{q,t},
    \label{eq_moment_recursion}
\end{align}
with $\mathcal A_{0,r}=\mu_{0,r}$ and $\mathcal A_{q,0}=1$, so that $\mathbb E[A_c^2]=\mathcal A_{L,2}$ and $\mathbb E[A_c^4]=\mathcal A_{L,4}$.
\end{lemma}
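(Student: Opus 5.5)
The argument is an induction on $q$ that peels off one summand at a time, using the binomial theorem together with independence. Since $S_q=S_{q-1}+X_q$, where $S_{q-1}$ depends only on $X_0,\dots,X_{q-1}$, and all $X_l$ are mutually independent (they are functions of the distinct, independent fading amplitudes $a_{\mathsf{RU}}$ and $a_{\mathsf{RI},l},a_{\mathsf{IU},l}$), $X_q$ is independent of $S_{q-1}$. For each integer $r\ge 0$ I would expand
\begin{align}
    S_q^r=\sum_{t=0}^{r}\binom{r}{t}S_{q-1}^{\,r-t}X_q^{\,t},
\end{align}
take expectations term by term (a finite sum, so linearity applies), and factor $\mathbb E[S_{q-1}^{r-t}X_q^t]=\mathbb E[S_{q-1}^{r-t}]\,\mathbb E[X_q^t]=\mathcal A_{q-1,r-t}\,\mu_{q,t}$. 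This gives \eqref{eq_moment_recursion} immediately.

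Next I would check the bookkeeping. The base case is $S_0=X_0$, so $\mathcal A_{0,r}=\mu_{0,r}$. The convention $\mathcal A_{q,0}=1$ is consistent because $S_q^0=1$, and $\mu_{q,0}=1$ follows from $\mathcal M_0(m)=1$. Finiteness of all moments involved follows from the closed forms already given: $\mathcal M_r(m)=\Gamma(m+r/2)m^{-r/2}/\Gamma(m)<\infty$ for any $m>0$. The product structure $\mu_{l,r}=\kappa_l^r(\beta_{\mathsf{RI}}\beta_{\mathsf{IU}})^{r/2}\mathcal M_r(m_{\mathsf{RI}})\mathcal M_r(m_{\mathsf{IU}})$ itself rests on the independence of $a_{\mathsf{RI},l}$ and $a_{\mathsf{IU},l}$.

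Finally, Lemma~\ref{lem_eff_amp} gives $A_c=S_L$ under the far-field approximation, so $\mathbb E[A_c^2]=\mathcal A_{L,2}$ and $\mathbb E[A_c^4]=\mathcal A_{L,4}$. Running the recursion for $r\le 4$ requires only $\mathcal A_{q-1,0},\dots,\mathcal A_{q-1,4}$, so it closes on the finite set of orders $0,\dots,4$. The restriction to $r\le4$ is just the range needed for moment matching; the identity itself holds for every $r$.

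There is no real obstacle here. The only point needing care is to state explicitly that $S_{q-1}$ and $X_q$ are independent, which makes the expectation of the product factor. This independence comes from the per-element fading amplitudes being mutually independent and independent of the direct link, so I would note it as a modeling assumption carried over from the channel model.
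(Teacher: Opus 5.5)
Your proposal is correct and follows the paper's proof essentially verbatim: you write $S_q=S_{q-1}+X_q$, expand with the binomial theorem, factor each expectation using the independence of $S_{q-1}$ and $X_q$, and initialize with $S_0=X_0$. Your extra bookkeeping is sound and makes explicit what the paper leaves implicit, namely finiteness of the Nakagami moments, $\mu_{q,0}=1$, and $A_c=S_L$ via Lemma~\ref{lem_eff_amp}.
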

\begin{proof}
Write $S_q=S_{q-1}+X_q$ with $S_{q-1}$ and $X_q$ independent. The binomial theorem gives $S_q^r=\sum_{t=0}^{r}\binom{r}{t}S_{q-1}^{\,r-t}X_q^{\,t}$, and taking expectations with $\mathbb E[S_{q-1}^{\,r-t}X_q^{\,t}]=\mathcal A_{q-1,r-t}\,\mu_{q,t}$ by independence yields \eqref{eq_moment_recursion}. The initialization follows from $S_0=X_0$, and $\mathbb E[X_q^t]=\mu_{q,t}$ from the Nakagami raw moments.
\end{proof}

\subsubsection{Closed-Form BER}

Matching the mean and variance of $\gamma_b$ obtained from Lemma~\ref{lem_moments} to a Gamma distribution then yields a closed-form average BER.

\begin{theorem}[Moment-matched average BER]
\label{thm_ber}
Let $\mu_\gamma=\overline\gamma\,\mathcal A_{L,2}$ and $\sigma_\gamma^2=\overline\gamma^2(\mathcal A_{L,4}-\mathcal A_{L,2}^2)$, and approximate $\gamma_b$ by a Gamma random variable with shape $k_\gamma=\mu_\gamma^2/\sigma_\gamma^2$ and scale $\vartheta_\gamma=\sigma_\gamma^2/\mu_\gamma$. Then the average UE BER admits the closed form
\begin{align}
    \overline P_b
    \approx
    \frac1\pi
    \int_0^{\pi/2}
    \left(1+\frac{\vartheta_\gamma}{\sin^2\theta}\right)^{-k_\gamma}
    d\theta.
    \label{eq_ber_gamma}
\end{align}
\end{theorem}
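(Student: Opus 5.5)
The plan is to combine three ingredients: the exact moments of $\gamma_b$ from Lemma~\ref{lem_moments}, the Gamma moment-matching step, and Craig's alternative representation of the Gaussian $Q$-function. The MGF of the matched Gamma variable then turns the fading average into a single finite-range integral.

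\emph{Step 1: moments of $\gamma_b$.} By \eqref{eq_bit_snr}, $\gamma_b=\overline\gamma A_c^2$. Hence
\[
\mathbb E[\gamma_b]=\overline\gamma\,\mathbb E[A_c^2]=\overline\gamma\,\mathcal A_{L,2}=\mu_\gamma,
\qquad
\mathbb E[\gamma_b^2]=\overline\gamma^2\,\mathcal A_{L,4}.
\]
It follows that $\operatorname{Var}(\gamma_b)=\overline\gamma^2(\mathcal A_{L,4}-\mathcal A_{L,2}^2)=\sigma_\gamma^2$. Both quantities are exact and finite by Lemma~\ref{lem_moments}. I would also note that $\sigma_\gamma^2>0$ whenever some amplitude is nondegenerate, which holds for any finite Nakagami parameter. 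This guarantees that $k_\gamma$ and $\vartheta_\gamma$ are well defined and positive.

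\emph{Step 2: Gamma moment matching.} A Gamma$(k,\vartheta)$ variable has mean $k\vartheta$ and variance $k\vartheta^2$. Solving $k\vartheta=\mu_\gamma$ and $k\vartheta^2=\sigma_\gamma^2$ gives exactly $k_\gamma=\mu_\gamma^2/\sigma_\gamma^2$ and $\vartheta_\gamma=\sigma_\gamma^2/\mu_\gamma$. The approximation consists of replacing the true law of $\gamma_b$ by this Gamma law. Its MGF is $\mathcal M_\gamma(s)=\mathbb E[e^{-s\gamma}]=(1+s\vartheta_\gamma)^{-k_\gamma}$ for $s\ge 0$.

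\emph{Step 3: Craig's form and averaging.} I would use Craig's formula
\[
Q(x)=\frac1\pi\int_0^{\pi/2}\exp\!\Big(-\frac{x^2}{2\sin^2\theta}\Big)d\theta, \qquad x\ge0,
\]
which gives $Q(\sqrt{2\gamma})=\frac1\pi\int_0^{\pi/2}e^{-\gamma/\sin^2\theta}d\theta$. I would then take the expectation over the Gamma law and interchange expectation and integral. Tonelli's theorem justifies this, since the integrand is nonnegative and bounded by $1$. This yields $\overline P_b\approx\frac1\pi\int_0^{\pi/2}\mathcal M_\gamma(1/\sin^2\theta)\,d\theta$. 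Substituting the Gamma MGF gives \eqref{eq_ber_gamma}.

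The only real subtlety is conceptual rather than computational. The symbol $\approx$ comes solely from the two-moment Gamma substitution, while every subsequent step is exact. The proof should state clearly that this is where the approximation enters, and it need not bound the error. A remark could add that for $L=0$ the result is exact. In that case $A_c^2=\beta_{\mathsf{RU}}a_{\mathsf{RU}}^2$ is itself Gamma distributed, with $k_\gamma=m_{\mathsf{RU}}$, which recovers the no-RIS Gamma baseline.
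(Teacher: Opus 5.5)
Your proposal is correct and follows essentially the same route as the paper's proof. Both compute the exact mean and variance of $\gamma_b=\overline\gamma A_c^2$ from Lemma~\ref{lem_moments}, match them to a Gamma law, apply Craig's representation of $Q(\cdot)$, and evaluate the Gamma MGF; the paper writes that MGF as $(1-\vartheta_\gamma s)^{-k_\gamma}$ at $s=-1/\sin^2\theta$, which is the same as your form $(1+s\vartheta_\gamma)^{-k_\gamma}$ at $s=1/\sin^2\theta$, and your Tonelli justification and check that $\sigma_\gamma^2>0$ are small refinements the paper omits.
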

\begin{proof}
Since $\gamma_b=\overline\gamma A_c^2$, Lemma~\ref{lem_moments} gives $\mu_\gamma=\mathbb E[\gamma_b]$ and $\sigma_\gamma^2=\operatorname{var}(\gamma_b)$; matching these to a Gamma$(k_\gamma,\vartheta_\gamma)$ density yields the stated $k_\gamma,\vartheta_\gamma$. Using Craig's representation $Q(x)=\frac1\pi\int_0^{\pi/2}\exp\!\big(-\tfrac{x^2}{2\sin^2\theta}\big)\,d\theta$~\cite[Eq.~(6.44)]{Goldsmith2005},
\begin{align}
    \overline P_b
    =
    \mathbb E_{\gamma_b}\!\left[Q(\sqrt{2\gamma_b})\right]
    =
    \frac1\pi\int_0^{\pi/2}
    \mathbb E_{\gamma_b}\!\left[e^{-\gamma_b/\sin^2\theta}\right]d\theta.
\end{align}
The inner expectation is the Gamma MGF $M_{\gamma_b}(s)=(1-\vartheta_\gamma s)^{-k_\gamma}$ evaluated at $s=-1/\sin^2\theta$~\cite[cf.~Eq.~(2.105)]{Barkat2005}, which gives $(1+\vartheta_\gamma/\sin^2\theta)^{-k_\gamma}$ and hence \eqref{eq_ber_gamma}.
\end{proof}

\begin{remark}
Equation \eqref{eq_ber_gamma} is a moment-matched approximation, not an exact distributional result. As $L$ grows, $A_c$ aggregates more independent terms and its squared envelope is increasingly well approximated by a Gamma law, so the approximation tightens (cf. Corollary~\ref{cor_l2_gain}).
\end{remark}

In the absence of the RIS, the effective amplitude reduces to a single Nakagami term and the BER admits an exact closed form.

\begin{corollary}[Exact no-RIS baseline BER]
\label{cor_noris}
For $L=0$, $\gamma_{b,0}=\overline\gamma\beta_{\mathsf{RU}}a_{\mathsf{RU}}^2$ is exactly Gamma distributed and the average BER is
\begin{align}
    \overline P_b^{\mathrm{dir}}
    =
    \frac1\pi
    \int_0^{\pi/2}
    \left(1+\frac{\overline\gamma\beta_{\mathsf{RU}}}{m_{\mathsf{RU}}\sin^2\theta}\right)^{-m_{\mathsf{RU}}}
    d\theta.
    \label{eq_ber_noris}
\end{align}
\end{corollary}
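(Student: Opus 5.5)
With $L=0$ the sum in \eqref{eq_effective_amplitude} is empty, so Lemma~\ref{lem_eff_amp} gives $A_c=\sqrt{\beta_{\mathsf{RU}}}\,a_{\mathsf{RU}}$. By \eqref{eq_bit_snr} the bit SNR is then $\gamma_{b,0}=\overline\gamma\beta_{\mathsf{RU}}a_{\mathsf{RU}}^2$. The proof has three steps: show that $\gamma_{b,0}$ is exactly Gamma distributed and identify its shape and scale; average the Craig representation against this law; and note that the moment-matching of Theorem~\ref{thm_ber} is exact here, so no approximation enters.

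\emph{Step 1 (distribution).} For $a\sim\operatorname{Nakagami}(m,1)$ the density is $f_a(x)=\frac{2m^m}{\Gamma(m)}x^{2m-1}e^{-mx^2}$. The change of variables $y=x^2$ gives $f_{a^2}(y)=\frac{m^m}{\Gamma(m)}y^{m-1}e^{-my}$, so $a^2\sim\mathrm{Gamma}(m,1/m)$ in shape--scale form. Scaling by the deterministic constant $\overline\gamma\beta_{\mathsf{RU}}>0$ keeps the Gamma family and multiplies the scale. Hence $\gamma_{b,0}\sim\mathrm{Gamma}(k,\vartheta)$ with $k=m_{\mathsf{RU}}$ and $\vartheta=\overline\gamma\beta_{\mathsf{RU}}/m_{\mathsf{RU}}$.

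As a consistency check, the recursion of Lemma~\ref{lem_moments} with $L=0$ gives $\mathcal A_{0,2}=\beta_{\mathsf{RU}}\mathcal M_2(m_{\mathsf{RU}})=\beta_{\mathsf{RU}}$. It also gives $\mathcal A_{0,4}=\beta_{\mathsf{RU}}^2(m_{\mathsf{RU}}+1)/m_{\mathsf{RU}}$. Plugging these into the formulas of Theorem~\ref{thm_ber} returns $k_\gamma=m_{\mathsf{RU}}$ and $\vartheta_\gamma=\overline\gamma\beta_{\mathsf{RU}}/m_{\mathsf{RU}}$. So the moment-matched Gamma law coincides with the true law, and the "$\approx$" becomes an equality.

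\emph{Step 2 (averaging).} Write $Q(\sqrt{2\gamma})$ using Craig's formula, as in the proof of Theorem~\ref{thm_ber}. Exchange expectation and the $\theta$-integral; this is justified by Tonelli, since the integrand is nonnegative and bounded by $1$. Then evaluate the inner expectation with the exact Gamma MGF $\mathbb E[e^{s\gamma_{b,0}}]=(1-\vartheta s)^{-k}$, which holds for $s<1/\vartheta$ and in particular at $s=-1/\sin^2\theta<0$. This yields $\big(1+\overline\gamma\beta_{\mathsf{RU}}/(m_{\mathsf{RU}}\sin^2\theta)\big)^{-m_{\mathsf{RU}}}$, and integrating over $\theta$ gives \eqref{eq_ber_noris}.

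No step is a real obstacle. The only point needing care is Step 1: confirming that the Nakagami parametrization with unit spread $\Omega=1$ gives $a^2$ scale $1/m$ rather than $m$, because this fixes the factor $1/m_{\mathsf{RU}}$ inside the integrand.
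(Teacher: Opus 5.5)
Your proposal is correct and follows the paper's proof. Both reduce to $A_c=\sqrt{\beta_{\mathsf{RU}}}\,a_{\mathsf{RU}}$, use $a_{\mathsf{RU}}^2\sim\mathrm{Gamma}(m_{\mathsf{RU}},1/m_{\mathsf{RU}})$ to get $\gamma_{b,0}\sim\mathrm{Gamma}(m_{\mathsf{RU}},\overline\gamma\beta_{\mathsf{RU}}/m_{\mathsf{RU}})$, and apply the Craig--MGF argument of Theorem~\ref{thm_ber} with the exact shape and scale; your change of variables, Tonelli justification, and check that the moment-matched parameters reduce exactly to these values are correct details the paper leaves implicit.
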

\begin{proof}
For $L=0$, $A_c=\sqrt{\beta_{\mathsf{RU}}}\,a_{\mathsf{RU}}$. A Nakagami-$m$ amplitude with unit spread satisfies $a_{\mathsf{RU}}^2\sim\operatorname{Gamma}(m_{\mathsf{RU}},1/m_{\mathsf{RU}})$, so $\gamma_{b,0}\sim\operatorname{Gamma}(m_{\mathsf{RU}},\overline\gamma\beta_{\mathsf{RU}}/m_{\mathsf{RU}})$. Applying the Craig--MGF argument of Theorem~\ref{thm_ber} with the exact shape $m_{\mathsf{RU}}$ and scale $\overline\gamma\beta_{\mathsf{RU}}/m_{\mathsf{RU}}$ gives \eqref{eq_ber_noris}.
\end{proof}

\section{Numerical Results}
\label{sec_results}

We corroborate the analysis with Monte-Carlo simulation; the key settings are
summarized in Table~\ref{tab_params}. The communication links follow
Nakagami-$m$ fading, the sensing links are deterministic line-of-sight, and the
RIS is configured by the communication-centric phases \eqref{eq_comm_ris_phase}.

\begin{table}[t]
\centering
\footnotesize
\renewcommand{\arraystretch}{1.0}
\caption{Key Simulation Settings}
\label{tab_params}
\begin{tabular}{@{}ll@{}}
\toprule
Parameter & Value \\
\midrule
\multicolumn{2}{@{}l}{\textit{Waveform and frame}}\\
Carrier frequency $f_c$ (wavelength $\lambda$)         & $24$~GHz ($12.5$~mm)\\
MLS order / length $k$ / $N$                            & $7$ / $127$\\
Pulses per CPI $M$ / pilots $M_{\mathrm{pl}}$           & $64$ / $8$\\
Chip rate $f_s=$ bandwidth $B$                          & $20$~MHz\\
Chip / PRI / CPI duration                              & $50$~ns / $6.35\,\mu$s / $406\,\mu$s\\
Range / velocity resolution                            & $7.49$~m / $15.4$~m/s\\
\midrule
\multicolumn{2}{@{}l}{\textit{Channels}}\\
Reference distance $d_0$                               & $100$~m\\
Path-loss exp. $\alpha_{\mathsf{RU}}$ / $\alpha_{\mathsf{RI}},\alpha_{\mathsf{IU}}$ / $\alpha_{\mathsf{RT}},\alpha_{\mathsf{IT}}$ & $2.2$ / $2.0$ / $2.0$\\
Ref. gains $\Psi_{\mathsf{RU}}$ / $\Psi_{\mathsf{RI}},\Psi_{\mathsf{IU}}$ & $0.5$ / $0.02$\\
Nakagami $m_{\mathsf{RU}},m_{\mathsf{RI}},m_{\mathsf{IU}}$ & $2,\,3,\,2$\\
Target RCS $\sigma$ / atmos. atten.                    & $1$~m$^2$ / $0.01$~dB/km\\
Antenna gains $G,G_{\mathsf I}$ / system loss $L_s$    & $30,5$~dBi / $5$~dB\\
\midrule
\multicolumn{2}{@{}l}{\textit{RIS and detection}}\\
RIS elements $L$ / spacing / $\kappa_l$                & $16$ / $\lambda/2$ / $0.85$\\
CFAR $P_{\mathrm{FA}}$ / train., guard cells           & $10^{-4}$ / $4,1$\\
\midrule
\multicolumn{2}{@{}l}{\textit{Geometry [m] (see Fig.~\ref{fig_topology})}}\\
$\mathsf R,\mathsf I,\mathsf U,\mathsf T$ positions     & $(0,0),(150,30),$\\
                                                        & $(300,0),(500,200)$\\
Target velocity $\mathbf v_{\mathsf T}$                 & $(24,-24)$~m/s\\
\bottomrule
\end{tabular}
\end{table}

\begin{figure}[t]
    \centering
    \includegraphics[width=\columnwidth]{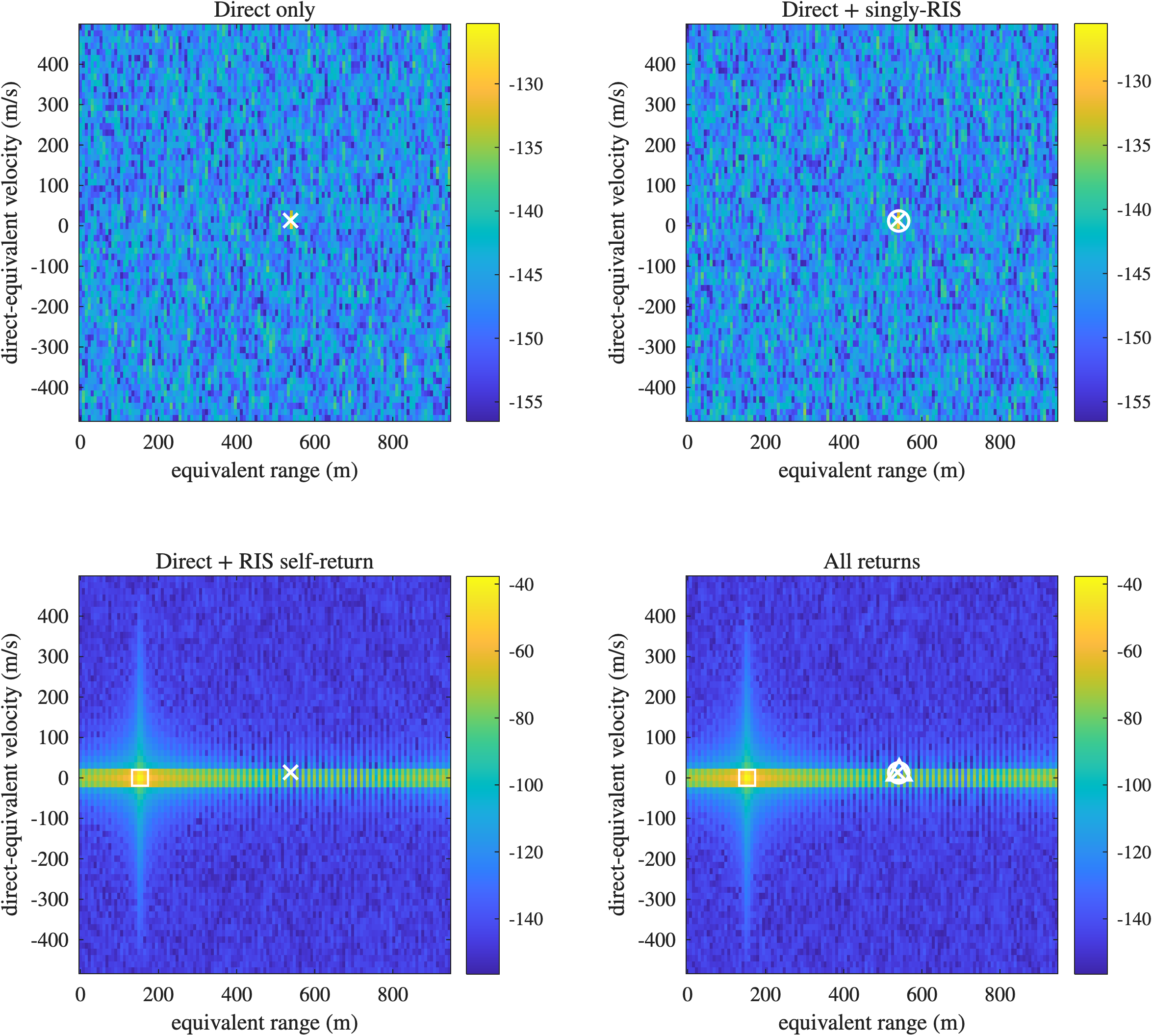}
    \caption{Range--Doppler power maps for the four reflection mechanisms.}
    \label{fig_rdmap}
\end{figure}

Fig.~\ref{fig_rdmap} shows the range--Doppler map obtained from the composite
radar observation \eqref{eq_radar_echo} for the four reflection mechanisms of
Lemma~\ref{lem_returns}. The direct target return produces a sharp peak
(marked \texttt{x}) at its range--Doppler cell and is unchanged by the RIS,
while the communication-centric RIS configuration leaves the RIS-assisted
returns mis-phased: the static self-return $\xi_2$ appears as a zero-Doppler
clutter ridge and the singly/doubly RIS-assisted returns $\xi_1,\xi_3$ do not
coherently reinforce the target. This confirms the structural prediction of
Lemma~\ref{lem_returns} and Remark~\ref{rem_noncoherent}.

\begin{figure}[t]
    \centering
    \includegraphics[width=\columnwidth]{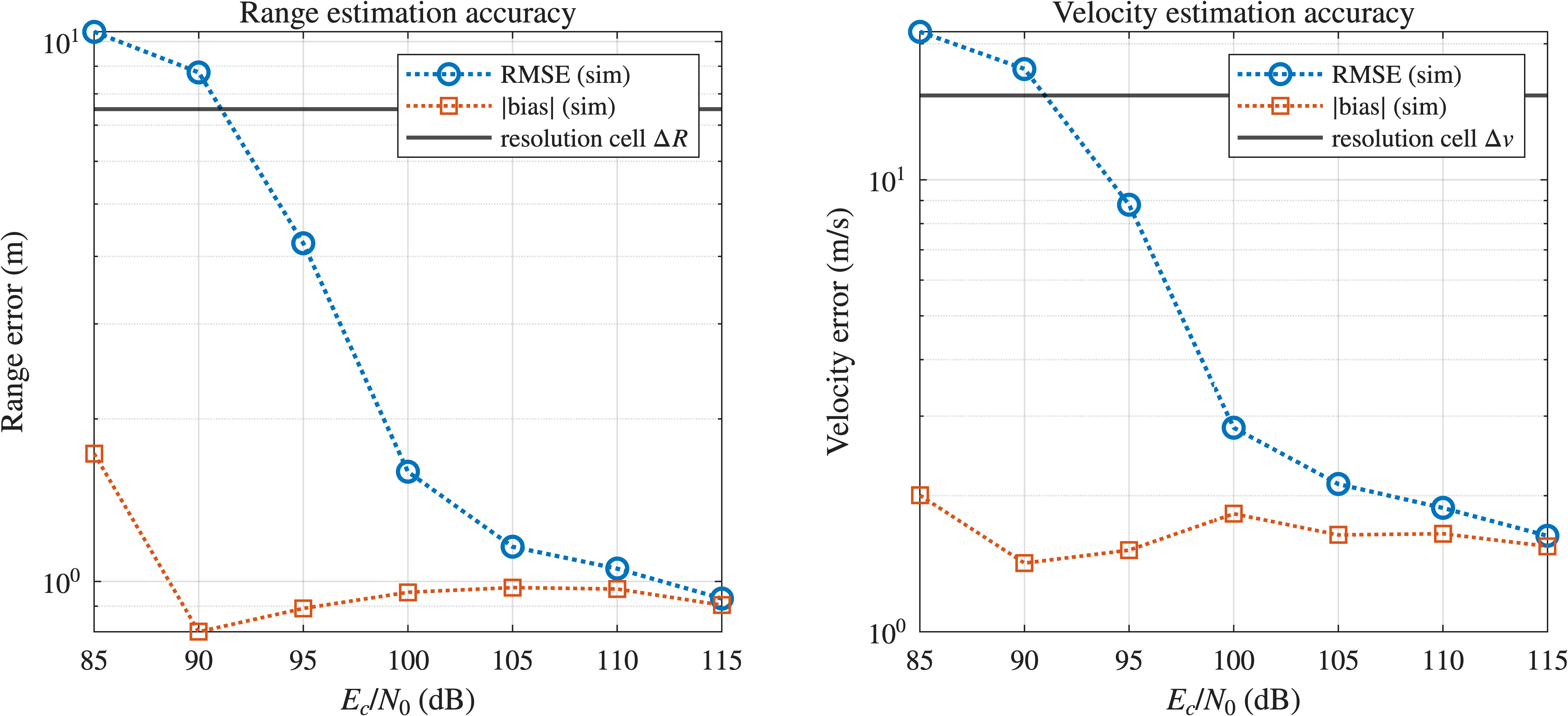}
    \caption{Direct-path range and velocity estimation accuracy versus $E_c/N_0$.}
    \label{fig_sensing}
\end{figure}

Fig.~\ref{fig_sensing} reports the root-mean-square error (RMSE) and bias of
the direct-path range and velocity estimates as a function of $E_c/N_0$, using
the direct-only return with a rectangular slow-time window so that the
accuracy reflects the intrinsic capability of the dual-function waveform. The RMSE is large at low $E_c/N_0$, where the
range--Doppler peak competes with noise, and decreases monotonically, dropping
below the resolution cells $\Delta R=c_0T_c/2$ and $\Delta v=\lambda/(2MT_{\mathrm{PRI}})$
at high $E_c/N_0$, i.e., the estimator attains sub-resolution accuracy through
sub-bin peak interpolation. Across the sweep the bias magnitude remains far
below the RMSE, so the error is dominated by zero-mean random spread rather
than a systematic offset, confirming that the estimator is essentially
unbiased; the small high-$E_c/N_0$ floor is the residual sub-bin
interpolation/grid bias.

\begin{figure}[t]
    \centering
    \includegraphics[width=\columnwidth]{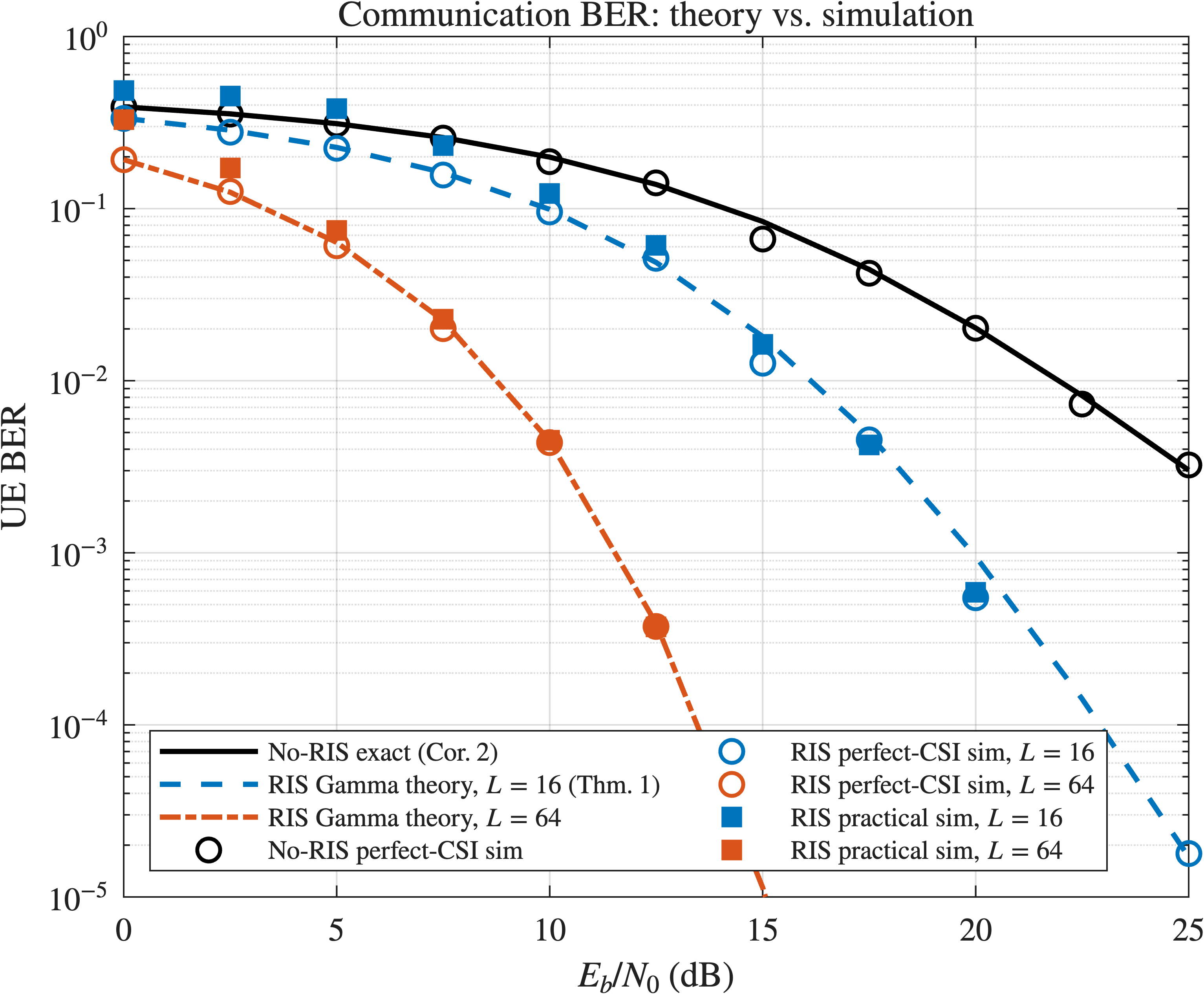}
    \caption{Average UE BER versus $E_b/N_0$ for $L\in\{16,64\}$.}
    \label{fig_ber}
\end{figure}

Fig.~\ref{fig_ber} compares the average UE BER from
Theorem~\ref{thm_ber} and Corollary~\ref{cor_noris} against Monte-Carlo
simulation for $L\in\{16,64\}$. The moment-matched Gamma approximation \eqref{eq_ber_gamma}
closely follows both the perfect-CSI bit simulation and the practical
pilot-based receiver, and the exact no-RIS baseline \eqref{eq_ber_noris}
agrees with its simulation. The RIS-assisted link attains a large $E_b/N_0$ advantage over the
direct link, and increasing $L$ from $16$ to $64$ shifts the BER curve left
by the $\approx 12$~dB predicted by the $20\log_{10}L$ array gain of
Corollary~\ref{cor_l2_gain}.

\begin{remark}[Estimation/synchronization penalty]
\label{rem_estimation_penalty}
For each RIS size, Fig.~\ref{fig_ber} reports two simulated curves: a
perfect-CSI receiver with genie channel knowledge and the
practical receiver of Section~\ref{sec_receiver} that acquires timing and
estimates $h_{\mathrm{eff}}$ from a short $M_{\mathrm{pl}}$-symbol pilot prefix
by least squares. The gap between them is the
estimation/synchronization penalty incurred by replacing genie knowledge with
pilot-based estimation. This gap is small and concentrated at low $E_b/N_0$,
where the short-pilot least-squares estimate of $h_{\mathrm{eff}}$ is noisiest.
The resulting three-way agreement, practical $\approx$ perfect-CSI $\approx$
moment-matched Gamma theory \eqref{eq_ber_gamma}, validates both the UE
receiver of Section~\ref{sec_receiver} and the closed-form BER of
Theorem~\ref{thm_ber}.
\end{remark}

\begin{figure}[t]
    \centering
    \includegraphics[width=\columnwidth]{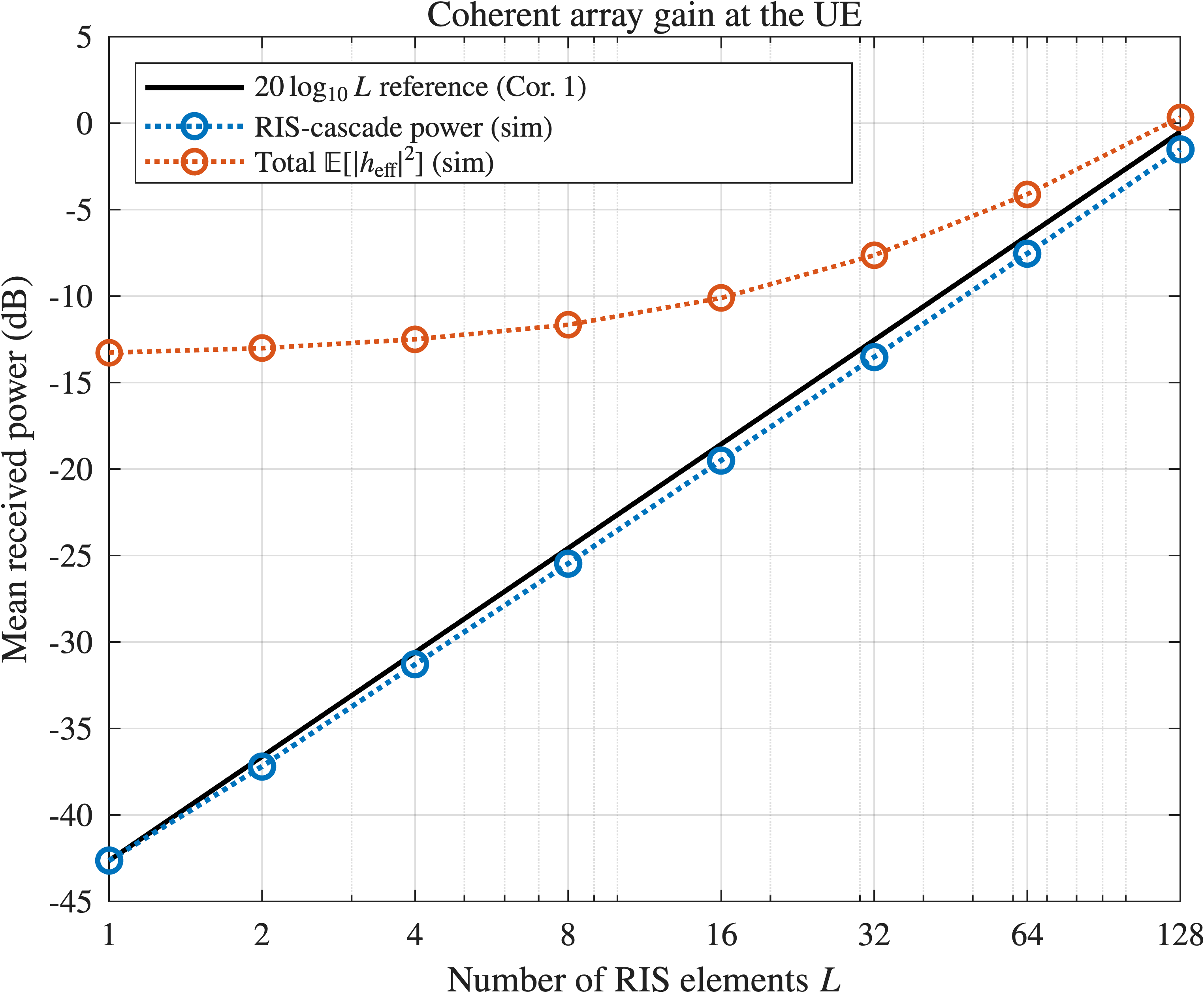}
    \caption{Mean received power at the UE versus $L$.}
    \label{fig_gain}
\end{figure}

Fig.~\ref{fig_gain} reports the mean received power $\mathbb E[|h_{\mathrm{eff}}|^2]$
as a function of $L$. The RIS-cascade component grows at $19.9$~dB/decade,
matching the $20\log_{10}L$ slope predicted by Corollary~\ref{cor_l2_gain},
and the total effective power transitions into the $\Theta(L^2)$ regime once
the RIS term dominates the direct path. This validates the $L^2$ coherent
array gain delivered to the UE by the communication-centric RIS
configuration.

\section{Conclusion}
\label{sec_conclusion}

We considered a monostatic RIS-assisted ISAC system using MLS-based PMCW waveforms over Nakagami-$m$ fading. We proposed a communication-centric RIS configuration co-phasing onto the direct radar--UE path and a dual-function waveform embedding BPSK bits via pulse polarity modulation. We derived closed-form BER expressions using a moment-matched Gamma approximation and sensing SNR for the direct echo. We showed that the RIS configuration achieves $L^2$ coherent array gain at the UE with robust pilot-based channel estimation; however, RIS-assisted returns become mis-phased at the radar, acting as clutter while preserving accurate range-Doppler estimation from the direct echo.

\bibliographystyle{IEEEtran}
\bibliography{references}
\end{document}